\newcommand{\Unif}{{\rm Unif}}
\newcommand{\I}{{\rm I}}
\newcommand{\Normal}{{\rm Normal}}
\newcommand{\E}{\mathbb{E}}
\renewcommand{\v}{{\rm var}}
\renewcommand{\P}{\mathbb{P}}
\newtheorem{lemma}{Lemma}[section]
\begin{document}

\twocolumn[

\aistatstitle{A simple sketching algorithm for entropy estimation over streaming data}

\aistatsauthor{ Peter Clifford \And Ioana Ada Cosma }
\aistatsaddress{ Department of Statistics \\
University of Oxford \\
Oxford, UK. OX1 3TG  \\
{\tt peter.clifford@jesus.ox.ac.uk} \And Department of Mathematics and Statistics \\
University of Ottawa \\
Ottawa, ON, Canada. K1N 6N5 \\
{\tt icosma@uottawa.ca}}
]

\begin{abstract}
We consider the problem of approximating the empirical Shannon entropy of a high-frequency data stream under the relaxed strict-turnstile model, when space limitations make exact computation infeasible. An equivalent measure of entropy is the R\'{e}nyi entropy that depends on a constant $\alpha$.  This quantity can be estimated efficiently and unbiasedly from a low-dimensional synopsis called an $\alpha$-stable data sketch via the method of compressed counting.  An approximation to the Shannon entropy can be obtained from the R\'enyi entropy by taking $\alpha$ sufficiently close to 1. However, practical guidelines for parameter calibration with respect to $\alpha$ are lacking. We avoid this problem by showing that the random variables used in estimating the R\'{e}nyi entropy can be transformed to have a proper distributional limit as $\alpha$ approaches 1: the maximally skewed, strictly stable distribution with $\alpha = 1$ defined on the entire real line.  We propose a family of asymptotically unbiased \emph{log-mean estimators} of the Shannon entropy, indexed by a constant $\zeta > 0$, that can be computed in a single-pass algorithm to provide an additive approximation. We recommend the log-mean estimator with $\zeta = 1$ that has exponentially decreasing tail bounds on the error probability, asymptotic relative efficiency of 0.932, and near-optimal computational complexity.  
\end{abstract}

\section{INTRODUCTION}

Streaming data is ubiquitous in a wide range of areas from engineering, and information technology, finance, and commerce, to atmospheric physics, and earth sciences \citep{Muthukrishnan.05, Aggarwal.07}.  The term \emph{streaming data} refers to the situation where data is continuously generated at high speed, and must be processed in real time to facilitate data analysis and decision making.  The Shannon entropy \citep{SW.49} provides an important characterization of a data stream with many areas of application, e.g., network traffic monitoring for the purpose of anomaly detection or traffic clustering, analysis of commercial search logs, and signal processing.  In network traffic monitoring, \cite{LSOXZ.06} show that the empirical Shannon entropy is an appropriate summary statistic for capturing changes in the underlying traffic distribution.  Changes in the distribution of the number of packets observed at different ports can be indicative of port scanning attacks.

In recent years, several algorithms have been developed for estimating the Shannon entropy over streaming data \citep{BG.06, CDBM.06, ZLOSWX.07, HNO.08.2, CCM.10, LZ.11}. Many of these algorithms are based on the approach of $\alpha$-stable \emph{data sketching} \citep{Indyk.06}.  Sketches are low-dimensional data structures, usually in vector or matrix format; when a new element in the stream is observed, the sketch is potentially updated, and the element is discarded. This update is handled in the same way, irrespective of the order of arrival of past data.  The idea is to construct and maintain on-the-fly a compact synopsis of the data stream such that summary statistics of interest can be accurately approximated from the synopsis; in general, synopsis construction is specific to the statistic of interest.

This paper considers the problem of estimating the Shannon entropy of a data stream under the assumption that the number of distinct elements observed in the stream is prohibitively large, so that the vector of cumulative quantities cannot be stored on main computer memory for fast and efficient access.  We employ the method of $\alpha$-stable data sketching, i.e., transforming distinct stream elements online to distinct realizations of a stable variable of index $\alpha$ (called $\alpha$-stable hereafter), and storing weighted linear combinations of these realizations, independently replicated $k$ times.  These weighted linear combinations, known as \emph{random projections}, form a $k$-dimensional synopsis of the data stream, called a data sketch hereafter, where $k$ is determined by the accuracy desired in approximating the entropy.  We consider data streams under the relaxed strict-turnstile model, which allows deletions, provided that, whenever the entropy is estimated, all cumulative quantities are non-negative.  

We sketch distinct stream elements to pseudo-random variates following the maximally skewed $\alpha$-stable distribution with $\alpha =1$ via the \emph{method of seeding}; the stream elements effectively index the random variates.  This is in contrast to existing approaches that involve sketching with $\alpha$ close to 1, thus introducing an additional source of error as explained in Section~\ref{sec:related}.  We present a family of \emph{log-mean estimators} of the Shannon entropy whose construction is simple and direct, requiring a single pass over the data stream.  We give explicit algorithms for implementing this estimation procedure, and analyze their computational complexity in terms of the length of the stream, and the additive approximation error.

\subsection{Notation and terminology}

A {\it data stream} $S_T$ of length $T$ is a transiently observed sequence of data elements $(i_t,d_t)$ that arrive unordered, with repetition, and at very high rate of transmission.  The item type $i_t$ belongs to a large or possibly infinite set $\mathcal{D}=\{c_1,c_2,\dots,c_N\}$ and the associated quantity is $d_t \in \mathbb{R}$, for $t=1,2,\dots,T$.  If $d_t < 0$, then the data element $(i_t, d_t)$ is a deletion from the stream; otherwise, it is an insertion.  For simplicity, we assume that $T \geq N$.  The empirical probability distribution is given by 
$$p_j = \frac{a_j}{\sum_{i=1}^N a_i}, \ j=1,\ldots,N,
$$
where $a_j = \sum_{t=1}^T d_t \mathbb{I} (i_t = c_j)$ is the cumulative quantity of elements of type $c_j$ at stage $T$, and $a_j \geq 0 \ \forall j$ at every stage $T$ of interest.   So, the empirical distribution is well-defined.  This is called the relaxed strict-turnstile model.

The {\it empirical Shannon entropy} is defined by 
\begin{equation} \label{def:Shannon}
H(p) = -\sum_{j=1}^N p_j \log p_j,
\end{equation} 
where, by convention, $p \log p$ is defined to be 0 when $p=0$, and $\log$ is the logarithm function to the base $e$.  Equivalent measures of entropy are the R\'{e}nyi  \citep{Renyi.61} and Tsallis \citep{Tsallis.88} entropies, given, respectively, by
\begin{align*} 
H_{\alpha} (p) &= \frac{1}{1-\alpha} \log \left ( \sum_{j=1}^N p_j^{\alpha} \right ), \\
S_{\alpha}(p) &= \frac{1}{1- \alpha} \left (\sum_{j=1}^N p_j^{\alpha} -1 \right ),
\end{align*}
for $0 \leq \alpha$ and $\alpha \neq 1$.  $H_{\alpha}(p)$ and $S_{\alpha}(p)$ equal the Shannon entropy in the limit as $\alpha$ tends to 1.  Both quantities $H_{\alpha}(p)$ and $S_{\alpha}(p)$ are functions of the $\alpha$th frequency moment, denoted by $F_{\alpha}(p)$, and defined as
$$
F_{\alpha}(p) = \sum_{j=1}^N p_j^{\alpha},
$$
a connection that is exploited by many algorithms for estimating the Shannon entropy, as explained in Section~\ref{sec:related}.

\subsection{Data sketching and the stable distribution} \label{sec:stable}

We employ the method of data sketching to the $\alpha$-stable distribution.  Following \cite{Zolotarev.86}, the stable distribution has four parameters: index $\alpha \in (0,2]$, skewness $\beta \in [-1,1]$, location $\delta \in \mathbb{R}$, and scale $\gamma > 0$, denoted by $F(x; \alpha, \beta, \gamma, \delta)$.  If $X$ has distribution $F(x; \alpha, \beta, \gamma, \delta)$ (written as: $X \sim F(x; \alpha, \beta, \gamma, \delta)$), then its characteristic function (c.f.) $\phi(\theta) = \mathbb{E} \exp(i \theta X)$, $\theta \in \mathbb{R}$, is given by
\begin{displaymath}
\phi(\theta)  = \left \{ \begin{array}{l}
\exp \left (\gamma^{\alpha} [-|\theta|^{\alpha} + i\theta |\theta|^{\alpha - 1}\beta \tan( \frac{\pi \alpha}{2}) ] + i \delta \theta \right ), \\
\quad \quad \textrm{if $\alpha \neq 1$} \\  
\exp \left (\gamma [-|\theta|- i\theta \beta (\frac{2}{\pi}) \log |\theta| ] + i \delta \theta \right ), \\
\quad \quad \textrm{if $\alpha = 1$,} 
\end{array} \right.
\end{displaymath}
where $\mathbb{E}$ denotes expected value, and $i=\sqrt{-1}$.  If $\beta = \pm 1$, the distribution is called maximally skewed.  In particular, we sketch to the maximally skewed distribution $F(x;1,-1,\pi/2,0)$ by simulating independent draws using the algorithm in Table~\ref{alg:sim} \citep{Zolotarev.86}.

\begin{table}[ht]
\caption{Algorithm to simulate from the maximally skewed stable distribution $F(x;1,-1,\pi/2,0)$.  $\Unif(0,1)$ denotes the uniform distribution on $(0,1)$} \label{alg:sim}
\begin{center}
\begin{tabular}{l l}
\hline \\
{\bf 1:} & Simulate $U_1, U_2 \sim \Unif(0,1)$ independently. \\
{\bf 2:} & Let $W_1 = \pi (U_1 - \frac{1}{2})$ and $W_2 = -\log U_2$. \\
{\bf 3:} & Return $\tan(W_1) [\frac{\pi}{2} - W_1] + \log \left ( W_2  \frac{\cos W_1}{\pi/2 - W_1} \right ) $. \\
\\
\hline 
\end{tabular}
\end{center}
\end{table}   

Randomized algorithms for data sketching are probabilistic, in the sense that data stream elements are mapped deterministically to copies of pseudo-random variables, and the variables are transformed to form a synopsis representation of the data stream.  From this representation, an estimate of the Shannon entropy is derived whose accuracy can be guaranteed, to within a specified level $\epsilon$, with probability exceeding $1-\rho$.  In particular, a randomized algorithm for estimating $H(p)$ will return an $(\epsilon, \rho)$-approximation $\hat{H}(p)$ that satisfies: $\P (|\hat{H}(p) - H(p)| \leq \epsilon H(p) ) \geq 1 - \rho$ for a multiplicative approximation, and $\P (|\hat{H}(p) - H(p)| \leq \epsilon ) \geq 1-\rho$ for an additive approximation.  

In fact, it suffices to have a randomized algorithm that returns an approximation $\hat{H}(p)$ to within accuracy $\epsilon$ with probability greater than $0.5$.  An application of Chernoff's bounds \citep{Hoeffding.63} shows that from $n=\log (1/\rho)$ independent repetitions of the algorithm, the median of $\hat{H}_1(p), \ldots, \hat{H}_n(p)$ is an $(\epsilon, \rho)$-approximation of $H(p)$, where $\hat{H}_i(p)$ is the approximation from the $i$th repetition.  Hence, in general we speak of $\epsilon$-additive and $\epsilon$-multiplicative approximations.

\subsection{Related work on Shannon entropy estimation} \label{sec:related}

Approximating the empirical Shannon entropy from estimates of R\'enyi or Tsallis entropies started with the work of \citet{ZLOSWX.07}.  These authors show that the function $x \log(x)$ can be well approximated by a linear combination of two functions of the form $x^p$, $p \in (0,2]$, for $x$ less than an upper bound.  Summing over distinct data types, they obtain an estimate of the entropy from a linear combination of two frequency moments, effectively interpolating the entropy from two distinct values of $S_{\alpha}(p)$.   \citet{ZLOSWX.07} estimate the frequency moments by the random projections method of \citet{Indyk.06}.  For data types whose cumulative quantity exceeds the upper bound, they estimate the contribution to the entropy separately.

More generally, \citet{HNO.08.2} estimate the Shannon entropy via interpolation from several Tsallis entropy estimates, computed at optimal values of $\alpha$ to minimize the approximation error.   For arbitrary accuracy parameter $\epsilon > 0$, they present additive and multiplicative approximations in space $O(\epsilon^{-2} \log T (\log \log T + \log(1/\epsilon))^{O(1)} )$ for the relaxed strict-turnstile model.  The multiplicative approximation algorithm has near-optimal space complexity in terms of its dependence on $\epsilon$, compared to the lower bound of $\Omega(\epsilon^{-2} / \log^2(\epsilon^{-1}))$ \citep{CCM.10}.  Extensions to the general update model with no restrictions on deletions are possible, but the space bounds increase typically by a factor of $O(\log T)$ \citep{HNO.08.2}. 

Li presents two estimators (the geometric and harmonic mean estimators) of the $\alpha$th frequency moment, based on random projections to the symmetric, $\alpha$-stable distribution \citep{Li.08}, or the positive, $\alpha$-stable distribution \citep{Li.09a, Li.09b}.  The latter method is called {\it compressed counting}, and it improves over symmetric stable random projections in terms of the asymptotic variance of the estimator around $\alpha = 1$.  Compressed counting was recently applied to estimate the cardinality of a data stream in \cite{CC.12}.  \citet{Li.09a, Li.09b} suggests that the Shannon entropy can be estimated from expressions $S_{\alpha}(p)$ or $H_{\alpha}(p)$ with $\alpha$ close to 1 using the geometric mean estimator.  Unfortunately, the resulting algorithm is impractical since it has complexity of order $O(1/\Delta)$ with $\Delta=1-\alpha$.  \citet{LZ.11} offer a marked improvement with a new compressed counting algorithm that provides an $\epsilon$-additive estimate of the Shannon entropy with complexity $O(1/\epsilon^2)$.  In particular, they estimate the Shannon entropy with $\alpha \approx 1$ by
$$
H_{\alpha}(p) = -\log J_{\alpha}(p) - \frac{1}{\Delta} \log F_{1}^{\alpha}(p),
$$
where $J_{\alpha}(p) = F_{\alpha}^{-1/\Delta}(p)$, and the first frequency moment is computed exactly.  Moreover, the estimator of $J_{\alpha}(p)$ has near-optimal efficiency properties in estimating $F_{\alpha}(p)$, and exponentially decreasing tail bounds.  

In addition, \citet{HNO.08} analyze the rate of convergence of the R\'enyi entropy estimate of $H_{\alpha}(p)$ to the Shannon entropy as $\alpha \to 1^{+}$, and provide an explicit formula for $\alpha >1$ that guarantees an $\epsilon$-additive approximation.  They estimate $F_{\alpha}(p)$ by the method of symmetric stable random projections \citep{Li.08}, where $0 < \alpha \leq 2$.  However, a value of $\alpha$ exceeding 1 is not appropriate for maximally skewed stable random projections that require $\alpha < 1$.  Another disadvantage is the prohibitively large space complexity, of order $\tilde{O}(\epsilon^{-4} \log^4 T)$, ignoring logarithmic terms.

The problem of estimating the Shannon entropy is related to that of measuring pairwise independence via the Kullback-Leibler divergence \citep{KL.51}, where the latter has received a lot of attention in recent literature \citep{IM.08, GIM.08, BO.10}.  \citet{IM.08} present a single-pass algorithm for an $\epsilon$-additive approximation of the mutual information between two data streams.  The empirical mutual information is the Kullback-Leibler divergence of the joint distribution and the product of the marginals.  Their algorithm has space complexity $\tilde{O}(\epsilon^{-2})$, but an $(1+\epsilon)$-multiplicative approximation of the mutual information does not exist in small space \citep{IM.08}.   Since the mutual information can be expressed as the sum of the empirical Shannon entropies of the marginals minus the empirical Shannon entropy of the joint, our estimation approach can provide an additive approximation.   The same holds for the conditional entropy represented in terms of Shannon entropies.

\subsection{Our contributions} Let $\delta$ denote $-H(p)$, the negative of the empirical Shannon entropy.  In Section~\ref{sec:der} we present a family of log-mean estimators, denoted by $\hat{\delta}_{lm}(\zeta)$ and indexed by $\zeta > 0$, for the additive approximation of $\delta$.  The algorithm in Table~\ref{alg:est} implements the estimation procedure with $\zeta=1$, and has the following properties:
\begin{itemize}
\item
It requires a single pass over the data stream.
\item
It constructs a $k$-dimensional data sketch by projecting to maximally skewed stable random variables with distribution $F(x;1,-1,\pi/2,0)$.  
\item
It returns the log-mean estimator that avoids the problem of parameter calibration with respect to the index $\alpha$ \citep{HNO.08.2, LZ.11}, by going directly to the limit with $\alpha = 1$.   
\end{itemize}
Section~\ref{sec:motivate} provides the motivation, and the details are in Lemmas~\ref{stablelemma} and \ref{sum_stable}.  The proposed estimator with $\zeta = 1$ has the following properties:
\begin{itemize}
\item
It is asymptotically unbiased as $k \to \infty$, and we show in an empirical study that it has good small-sample performance.
\item
By estimating the entropy directly, rather than the R\'{e}nyi entropy $H_{\alpha}(p)$ with $\alpha \approx 1$, we can make precise statements about the efficiency of our estimator: it is near-optimal with asymptotic relative efficiency (ARE) \citep{Lehmann.98} of 0.932.
\item
Lemma~\ref{lemma:tail_bounds} shows that the estimator has exponentially decreasing tail bounds; in particular, for arbitrary $\epsilon > 0$ and fixed $\zeta \leq 1$,
\begin{align*}
\P \left ( \hat{\delta}_{lm}(\zeta) - \delta \geq \epsilon \right ) &< \exp \left ( -k \frac{\epsilon^2}{G_R} \right ) \\
\P \left ( \hat{\delta}_{lm}(\zeta) - \delta \leq -\epsilon \right ) &< \exp \left ( -k \frac{\epsilon^2}{G_L} \right ),
\end{align*}
where $G_L$ and $G_R$ are small constants that tend in value to 6 as $\epsilon \to 0$.  For $\epsilon \in [0.1,1]$, numerical approximations show that these constants fall in ranges $(4.0,6.0)$ and $(6.0, 9.5)$, respectively.  
\item
It follows from Lemma~\ref{lemma:tail_bounds} that for fixed $\rho \in (0,1)$, the data sketch size $k$ must be of order $O(1/\epsilon^2)$.
\item
The space complexity of the algorithm is $O \left (1/\epsilon^{2} \log T \log(T/\epsilon) \right )$ bits of space, which is near-optimal in terms of dependence on $\epsilon$; in particular, it is optimal up to $\log(1/\epsilon)$ \citep{KNPW.11}.
\end{itemize}  

\begin{table}[!ht]
\caption{Algorithm to approximate the empirical Shannon entropy of a data stream $\mathcal{S}_T$ via the log-mean estimator $\hat{\delta}_{lm}(1)$} \label{alg:est}
\begin{center}
\begin{tabular}{l l}
\hline \\
\hspace{0.2cm}{\bf 1:} & Initialize data sketch $(y_1, \ldots, y_k) = (0, \ldots, 0)$. \\
\hspace{0.2cm}{\bf 2:} & Set the counter $Y = 0$. \\  
\hspace{0.2cm}{\bf 3:} & For $t = 1$ to $T$ \\
\hspace{0.2cm}{\bf 4:} & \quad Update the counter $Y = Y + d_t$. \\
\hspace{0.2cm}{\bf 5:} & \quad Seed the PRNG with $i_t$. \\
\hspace{0.2cm}{\bf 6:} & \quad For $j = 1$ to $k$ \\
\hspace{0.2cm}{\bf 7:} & \quad \quad Generate $R_j(i_t) \sim F(x;1,-1,\pi/2,0)$ \\
\hspace{0.2cm}{\bf 8:} & \quad \quad Update $y_j = y_j + R_j(i_t) \times d_t$. \\
\hspace{0.2cm}{\bf 9:} & At time $t=T$, set $y_j = y_j / Y$ for $j=1, \ldots, k$. \\
{\bf 10:} & Return $\hat{H}(p)  = - \log \left ( k^{-1} \sum_{j=1}^k \exp(y_j) \right )$. \\
\\
\hline
\end{tabular} 
\end{center}
\end{table}

\section{THE LOG-MEAN ESTIMATOR}

\subsection{The method of random projections} \label{sec:motivate}

The method of random projections requires that each element type $c_j \in \mathcal{D}$ that appears in the data stream can be transformed into a distinct random variable $R(c_j)$.  In practice, this is achieved ``to adequate approximation'' by the method of seeding as follows: (i) map $c_j$ to an integer (or vector of integers), (ii) use these integers to seed a pseudo-random number generator (PRNG), and (iii) use the seeded PRNG to simulate the random variable $R(c_j)$.  

\citet{Nisan.92} shows that there exists an explicit implementation of a PRNG that converts a random seed to a sequence of bits, indistinguishable from truly random bits.  So we assume that our PRNG produces truly random variables $R(c_j)$.

The projection is then accumulated online as $\sum_{t=1}^TR(i_t) d_t = \sum_{j=1}^N R(c_j) a_j$. This sum is the dot product of the vector of cumulative quantities with a vector of $N$ independent random variables, each drawn from the maximally skewed stable distribution with $\alpha = 1$. This provides a single element of the data sketch.  A further $k-1$ elements are generated independently in parallel to form the $k$-dimensional data sketch.

We now motivate the use of the maximally skewed stable distribution with $\alpha = 1$ in the random projections method by showing how the problem of estimating the Shannon entropy reduces to that of approximating a location parameter. 

Define the quantity 
$$
B_\alpha = \left (\sum_{j=1}^N p_j^\alpha \right )^{1/\alpha} = F_{\alpha}^{1/\alpha}(p).
$$ 
Let 
$$
Z_\alpha \sim F \left (z; \alpha, 1, \left (\cos \left (\frac{\pi \alpha}{2} \right ) \right )^{1/\alpha}, 0 \right ),
$$ 
for fixed $0 < \alpha < 1$; this is the positive, strictly stable distribution with Laplace transform $e^{-\lambda^{\alpha}}$ for $\lambda \geq 0$.  Let $\left (Z_\alpha^{(1)},\dots,Z_\alpha^{(N)} \right )$ be a vector of independent copies of $Z_\alpha$ and let $p = (p_1,\dots,p_N)$ be a vector of frequencies that satisfy $\sum_{j=1}^N p_j = 1$. From \citet{Zolotarev.86}, we have that 
\begin{equation}\label{basic}
\sum_{j=1}^N Z_\alpha^{(j)} p_j \sim Z_\alpha \left(\sum_{j=1}^N p_j^\alpha\right)^{1/\alpha} = Z_\alpha B_\alpha.
\end{equation}
Projecting to the positive, strictly stable distribution and maintaining weighted linear combinations as in \eqref{basic} is precisely the method of compressed counting \citep{Li.09a, Li.09b}.  Compressed counting reduces the problem of Shannon entropy estimation to that of estimating the scale parameter $B_{\alpha} = J_{\alpha}^{-\Delta/\alpha}(p)$.

Instead, we project to the maximally skewed stable distribution with $\alpha=1$ and $\beta=-1$, defined on the entire real axis.  Starting from the R\'enyi entropy 
$$
H_{\alpha}(p) = \frac{\alpha}{1-\alpha} \log B_{\alpha},
$$
it is easy to show that as $\alpha \to 1$,
$$
\frac{1- B_\alpha}{1-\alpha} = \frac{1}{1-\alpha} \left [ 1 - e^{ (1-\alpha)H_{\alpha}(p) / \alpha } \right ] \to \delta,
$$ 
where $-\delta = - \sum_{j=1}^N p_j \log p_j$ is the Shannon entropy.  Next, we define 
\begin{displaymath}
Y_\alpha^{(j)} = \frac{ 1-Z_\alpha^{(j)} }{1-\alpha} + \log(1-\alpha),
\end{displaymath} 
and, using \eqref{basic}, we obtain
\begin{align} \nonumber 
\sum_{j=1}^N Y_{\alpha}^{(j)} p_j &= \sum_{j=1}^N\left[\frac{1 -Z_\alpha^{(j)}}{1-\alpha} + \log(1-\alpha)\right] p_j \\ \label{transform}
&\sim \left[ \frac{1 - Z_\alpha}{1-\alpha} + \log(1-\alpha)\right] + Z_\alpha \frac{(1 - B_\alpha)}{1-\alpha}.
\end{align}
Taking limits, $\sum_{j=1}^N Y_1^{(j)} p_j \sim Y_1 + \delta$, provided $Y_\alpha$ has a proper limit as $\alpha \to 1$, and using the fact that $Z_{\alpha} \to 1$ as $\alpha \to 1$. The following lemma provides the details.  

\begin{lemma}\label{stablelemma}
The random variable $Y_\alpha$ has a proper limit $Y_1$ as $\alpha \to 1$. The variable $Y_1$ has a maximally skewed stable distribution with $\alpha = 1$, and c.f.
$$
\phi(\theta) = \exp \left (-\frac12 \pi|\theta| + i \theta \log|\theta| \right ) = (i \theta)^{i \theta},
$$ 
i.e., $Y_1 \sim F(y;1,-1,\pi/2,0)$.  Moreover, the $k$th moment of the random variable $\exp(Y_1)$ is $k^k$ for all $k>0$.
\end{lemma}
\begin{proof}
See the supplementary material.
\end{proof}

The heart of our algorithm is contained in the following result; it shows that by sketching to the $F(y;1,-1,\pi/2,0)$ distribution, the negative of the Shannon entropy is recovered as the location parameter of the distribution of a linear combination weighted by the empirical probability mass function.  
 
\begin{lemma} \label{sum_stable}
Let $X_1,\ldots,X_N \sim F(x;1,-1,\pi/2,0)$ i.i.d., and let $p_1,\ldots,p_N$ be positive constants satisfying $\sum_{j=1}^N p_j =1$.  Then,
$$
\sum_{j=1}^N p_j X_j \sim F \left (x; 1, -1, \frac{\pi}{2}, \sum_{j=1}^N p_j \log p_j \right ).
$$
\end{lemma} 
\begin{proof}
The c.f.\ of $\sum_{j=1}^N p_j X_j$, for $\theta \in \mathbb{R}$, is given by
\begin{align*}
&\E \exp \left ( i \theta \sum_{j=1}^N p_j X_j \right ) = \prod_{j=1}^N \E \exp \left ( i \theta p_j X_j \right ) \\
& = \prod_{j=1}^N \exp \left ( -\frac{\pi}{2} p_j |\theta| + i \theta p_j \log | \theta p_j | \right ) \\
& = \exp \left ( \frac{\pi}{2} \left [ -|\theta| + i \theta \log |\theta| \frac{2}{\pi} \right ] + i \theta \sum_{j=1}^N p_j \log p_j \right ).
\end{align*}
The first equality follows from properties of characteristic functions of sums of independent random variables, the second follows from Lemma~\ref{stablelemma}, and the third equality uses the fact that $\sum_{j=1}^N p_j = 1$.  Since the distribution of a random variable is specified by its characteristic function \citep{GS.01}, the result follows by comparison to expression $\phi(\theta)$ in Section~\ref{sec:stable}.
\end{proof}

\subsection{Derivation of the family of log-mean estimators} \label{sec:der} 

\begin{lemma} \label{lemma:logmean}
Let $y_1,\ldots,y_k$ be independent samples from the $F \left (y;1,-1,\pi/2, \delta \right )$ distribution, and let $\zeta > 0$ be a constant.  The log-mean estimator of $\delta$ is 
\begin{displaymath}
\hat{\delta}_{lm}(\zeta) = \zeta^{-1} \log \left ( \zeta^{-\zeta} k^{-1} \sum_{j=1}^k \exp(\zeta y_j) \right ).
\end{displaymath}
As the sample size $k$ increases to $\infty$, the estimator is asymptotically unbiased; in particular, as $k \to \infty$,
$$
\sqrt{k} \left (\hat{\delta}_{lm}(\zeta) - \delta \right ) \to \Normal \left ( 0, \frac{4^{\zeta} - 1}{\zeta^2} \right ).
$$
Moreover, the Fisher information about $\delta$ contained in a single random variable from the $F \left (y;1,-1,\pi/2, \delta \right )$ distribution is approximately $0.3578$, so the ARE of $\hat{\delta}_{lm}(\zeta)$ is 
$$
ARE(\hat{\delta}_{lm}(\zeta)) = \frac{\zeta^2}{0.3578 (4^{\zeta} - 1)}.
$$
Hence, the estimator $\hat{\delta}_{lm}(1.15)$ is near-optimal with largest ARE of 0.942, and $\hat{\delta}_{lm}(1)$ has ARE of 0.932.    
\end{lemma}
\begin{proof}
See the supplementary material.
\end{proof}

In Section~\ref{sec:tails}, we show that the log-mean estimator has exponentially decreasing tail bounds only for $\zeta \leq 1$.  For this reason, we recommend the estimator with $\zeta = 1$ that attains the maximum ARE over the range $\zeta \leq 1$.  The algorithm in Table~\ref{alg:est} returns $\hat{H}(p) = -\hat{\delta}_{lm}(1)$.

\section{PERFORMANCE, TAIL BOUNDS, AND SPACE COMPLEXITY} \label{sec:tails}
Figure~\ref{figure:ARE_entropy} compares the log-mean estimator $\hat{\delta}_{lm}(\zeta)$ to the R\'enyi entropy estimator $\hat{H}_{\alpha}(p)$ \citep{LZ.11} in terms of asymptotic relative efficiency over the range of values $\zeta \in (0,2]$ and $\alpha \in [0.95, 0.99]$.   The Fisher information about $H_{\alpha}(p)$ contained in a positive, $\alpha$-stable random variable with scale parameter $F_{\alpha}^{1/\alpha}(p)$ is given by the expression $\I_2(\alpha) -1$ in \citet{LZ.11}.  The solid, straight line in Figure~\ref{figure:ARE_entropy} is the ARE of $\hat{H}_{\alpha}(p)$, given by 
$$
ARE(\hat{H}_{\alpha}(p)) = \frac{1}{(\I_2(\alpha )- 1) (1 + 2\alpha)},
$$
since $\sqrt{k} (\hat{H}_{\alpha}(p) - H_{\alpha}(p) ) \to \Normal(0, 1+2\alpha)$ as $k \to \infty$ \citep{LZ.11}.

Figure~\ref{figure:MSE_entropy} compares the performances of the log-mean estimator with $\zeta=1, 1.5$ and the estimator $\hat{H}_{\alpha}(p)$ with $\alpha = 0.97$ in terms of relative mean square error (MSE) in small samples.  Plotted for comparison is the Cram\'er-Rao lower bound, defined by $(k \times 0.3578)^{-1}$, and giving a lower bound on the variance of any unbiased estimator of $\delta$.  The MSE is given by $\mathbb{E} (\hat{\delta}_{lm}(\zeta) - \delta )^2$ for the log-mean estimator, and by $\mathbb{E} (\hat{H}_{\alpha}(p) + \delta )^2$ for the R\'enyi estimator, since $\hat{H}_{\alpha}(p)$ estimates $H(p) = -\delta$ for $\alpha \approx 1$.  The expectation is estimated from $10^5$ replicates.

\begin{figure}[!ht] 
\centerline{
\includegraphics[scale=0.4,angle=270]{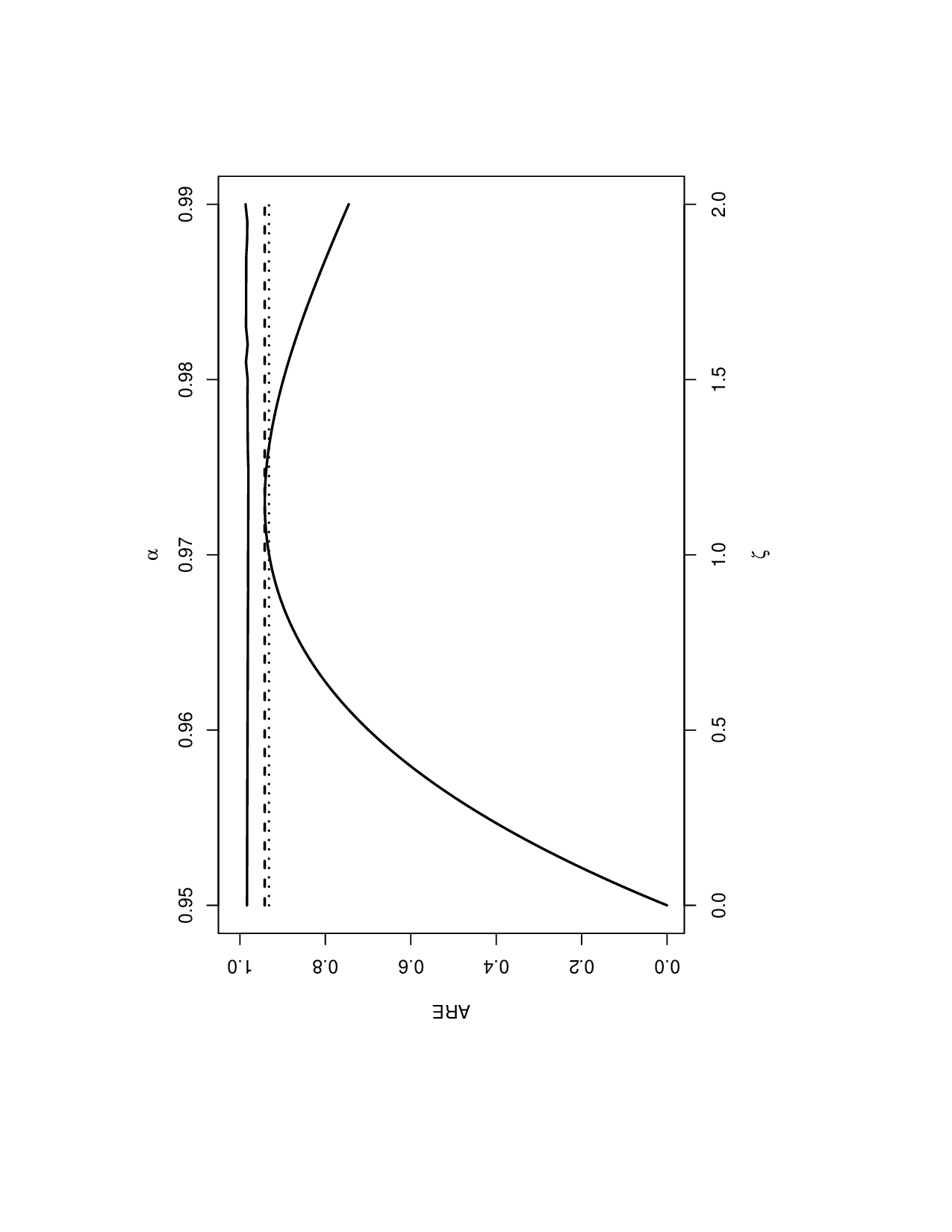}}
\caption{Comparison in terms of asymptotic relative efficiency of the log-mean estimator $\hat{\delta}_{lm}(\zeta)$ for $\zeta \in (0,2]$ (curved, solid line, bottom axis) to the R\'enyi entropy estimator $\hat{H}_{\alpha}(p)$ for $\alpha \in [0.95,0.99]$ (straight, solid line, top axis).  Horizontal lines are drawn at ARE = 0.942 (long dashed line) and ARE=0.932 (dotted line), and a vertical line at $\zeta = 1$, or equivalently, $\alpha=0.97$.    The ARE of $\hat{\delta}_{lm}(\zeta)$ refers to Shannon entropy estimation, whereas the ARE of $\hat{H}_{\alpha}(p)$ refers to R\'enyi entropy estimation.}
\label{figure:ARE_entropy} 
\end{figure}

All computations were performed using the statistical software {\tt R} (http://www.r-project.org/), and the same string of random numbers was employed in the computations of each estimator.  The estimator $\hat{H}_{\alpha}(p)$ is derived from a random sample of positive, $\alpha$-stable random variables, raised to the power $-\alpha/\Delta$ and scaled by $F_{\alpha}(p)^{-1/\Delta}$.   With $\alpha \approx 1$, $F_{\alpha}(p)$ is approximately equal to $\sum_{i=1}^N a_i$, the total cumulative quantity, and, if this quantity is large, then the scaling factor is effectively zero.  We experience this problem for our simulated data stream with values of $\alpha \geq 0.98$, hence our choice of $\alpha = 0.97$ for the small-sample comparison in Figure~\ref{figure:MSE_entropy}.   The estimator $\hat{H}_{0.97}(p)$ has ARE of 0.981. 

Figure~\ref{figure:MSE_entropy} shows that the three estimators have comparable performance in small samples.  Compared to the Cram\'er-Rao lower bound, the performance of the log-mean estimators is particularly good for $k \geq 20$.  However, we do not have a corresponding lower bound for the performance of $\hat{H}_{\alpha}(p)$ as an estimator of the Shannon entropy, since an analysis of the rate of convergence of the R\'enyi entropy to the Shannon entropy as $\alpha \to 1^{-}$ is lacking.

\begin{figure*}[!ht] 
\vspace{.3in}
\centerline{\includegraphics[scale=0.5,angle=270]{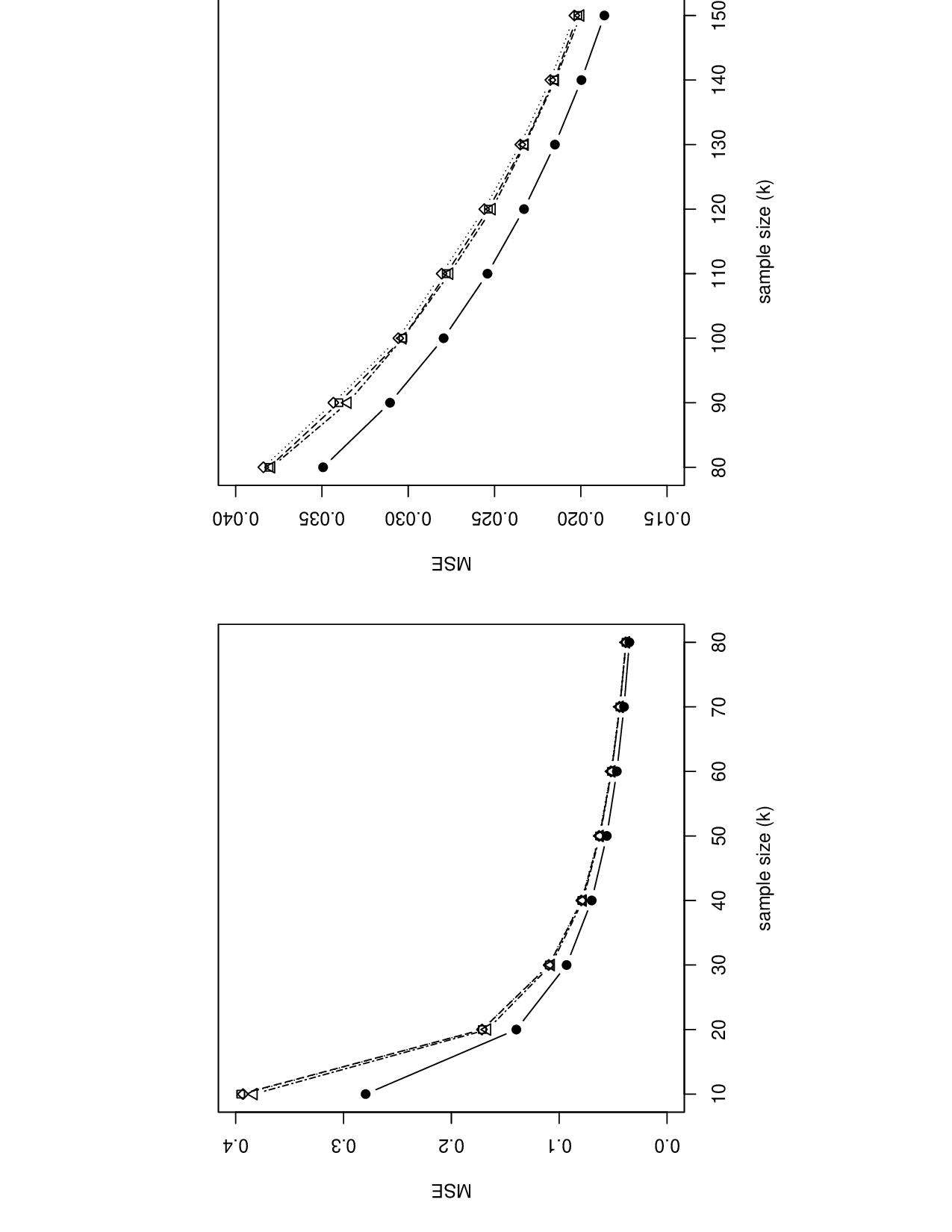}}
\vspace{.3in}
\caption{Comparison in terms of MSE of the log-mean estimators $\hat{\delta}_{lm}(1)$ (dotted line, $\Diamond$),  $\hat{\delta}_{lm}(1.15)$ (long dash line, $\Box$), and the estimator $\hat{H}_{0.97}(p)$ (two dash line, $\triangle$) of \cite{LZ.11}.  The solid line is the Cram\'er-Rao lower bound on the variance of an unbiased estimator of the Shannon entropy.  The MSE is estimated from $10^5$ replicates.  The MSE lines are indistinguishable, and, for $k \geq 20$, the small-sample performance of the log-mean estimators is very good compared to the Cram\'er-Rao lower bound.}
\label{figure:MSE_entropy} 
\end{figure*}

The length of the data sketch vector, $k$, is determined by the behaviour of the tail bounds of the additive approximation error.  Lemma~\ref{lemma:tail_bounds} shows that for $\zeta \leq 1$, the log-mean estimator has exponentially decreasing tail bounds. 

\begin{lemma} \label{lemma:tail_bounds}
Exponentially decreasing tail bound exist for $\zeta \leq 1$ and arbitrary $\epsilon > 0$, with
\begin{align*}
\P \left ( \hat{\delta}_{lm}(\zeta) - \delta \geq \epsilon \right ) &< \exp \left ( -k \frac{\epsilon^2}{G_R} \right ), \\
\P \left ( \hat{\delta}_{lm}(\zeta) - \delta \leq -\epsilon \right ) &< \exp \left ( -k \frac{\epsilon^2}{G_L} \right ),
\end{align*}
where 
$$
G_R = \frac{\epsilon^2}{\sup_{t > 0} Q_\zeta(t,\epsilon)} ,\ G_L = \frac{\epsilon^2}{\sup_{t > 0} Q_\zeta(-t,-\epsilon)},
$$ and
$$
Q_\zeta(t,\epsilon) = -\log \left (\sum_{j=0}^{\infty} t^j \frac{j^{\zeta j}}{j!} \right ) + t e^{\zeta \epsilon}.
$$
Furthermore as $\epsilon \to 0$ both $G_R$ and $G_L$ tend to $2(4^\zeta-1)/\zeta^2$.
\end{lemma}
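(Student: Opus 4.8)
The plan is to reduce both tail events to statements about the sum $\sum_{j=1}^k W_j$ of the i.i.d.\ positive variables $W_j=\exp(\zeta z_j)$, where $z_j\sim G(z;0)$, and then apply the exponential Markov (Chernoff) inequality. Unwinding the definition of $\hat\delta_{lm}$ and cancelling the common factor $e^{\zeta\delta}$ (which is why the bound is independent of $\delta$), the right-tail event $\{\hat\delta_{lm}-\delta\geq\epsilon\}$ becomes exactly $\{\sum_j W_j\geq k\zeta^\zeta e^{\zeta\epsilon}\}$ and the left-tail event becomes $\{\sum_j W_j\leq k\zeta^\zeta e^{-\zeta\epsilon}\}$. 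The crucial input is the moment identity from Lemma~\ref{stablelemma}: since $\mathbb{E}\exp(mz_j)=m^m$ for every $m>0$, we get $\mathbb{E}W_j^m=(\zeta m)^{\zeta m}$, and hence the moment generating function $\mathbb{E}\exp(tW_j)=\sum_{m\geq0}t^m(\zeta m)^{\zeta m}/m!=\sum_{m\geq0}(t\zeta^\zeta)^m m^{\zeta m}/m!$, the interchange of sum and expectation being licensed termwise by Tonelli because all summands are nonnegative.

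First I would treat the right tail. Markov's inequality applied to $\exp(t\sum_j W_j)$, $t>0$, gives $\P(\sum_j W_j\geq k\zeta^\zeta e^{\zeta\epsilon})\leq\exp(-k[t\zeta^\zeta e^{\zeta\epsilon}-\log\mathbb{E}\exp(tW_1)])$; substituting $s=t\zeta^\zeta$ turns the bracket into precisely $Q_\zeta(s,\epsilon)$, so optimising over $t>0$ (equivalently $s>0$) yields $\exp(-k\sup_{s>0}Q_\zeta(s,\epsilon))=\exp(-k\epsilon^2/G_R)$. Here I would observe that the series for $\mathbb{E}\exp(tW_1)$ converges for some $t>0$ only when $\zeta\leq1$: by Stirling its $m$th term behaves like $(t\zeta^\zeta e)^m m^{(\zeta-1)m}/\sqrt{m}$, summable for all $t$ when $\zeta<1$, only for $t\zeta^\zeta e<1$ when $\zeta=1$, and never when $\zeta>1$ — which is exactly why exponential tail bounds exist only for $\zeta\leq1$. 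The left tail is handled identically with $\exp(-t\sum_j W_j)$: now $s=-t\zeta^\zeta<0$, the alternating series $\sum_m s^m m^{\zeta m}/m!$ represents $\mathbb{E}\exp(-tW_1)$ (finite since $W_1\geq0$), and the Chernoff exponent becomes $-Q_\zeta(-t\zeta^\zeta,-\epsilon)$, giving $\exp(-k\sup_{t>0}Q_\zeta(-t,-\epsilon))=\exp(-k\epsilon^2/G_L)$.

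For the $\epsilon\to0$ asymptotics I would pass to the cumulant generating function $\Lambda(t)=\log\mathbb{E}\exp(tW_1)$ and recognise $\sup_{s>0}Q_\zeta(s,\epsilon)$ as the Legendre transform $\Lambda^\ast(a)$ evaluated at the mean-threshold $a=\zeta^\zeta e^{\zeta\epsilon}$. Since $a\to\mu:=\mathbb{E}W_1=\zeta^\zeta=\Lambda'(0)$ as $\epsilon\to0$, the standard local expansion of a Legendre transform about the mean gives $\Lambda^\ast(\mu+\eta)=\eta^2/(2\Lambda''(0))+o(\eta^2)$, with $\eta=\zeta^\zeta(e^{\zeta\epsilon}-1)\sim\zeta^\zeta\zeta\epsilon$ and $\Lambda''(0)=\operatorname{Var}(W_1)=\mathbb{E}W_1^2-(\mathbb{E}W_1)^2=(2\zeta)^{2\zeta}-\zeta^{2\zeta}=\zeta^{2\zeta}(4^\zeta-1)$. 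Hence $\sup_{s>0}Q_\zeta(s,\epsilon)\sim\zeta^{2\zeta}\zeta^2\epsilon^2/(2\zeta^{2\zeta}(4^\zeta-1))$, so $G_R\to2(4^\zeta-1)/\zeta^2$; the identical deviation magnitude $|\eta|$ for the left tail yields the same limit for $G_L$.

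The main obstacle I anticipate is this $\epsilon\to0$ step rather than the Chernoff bound itself: one must justify that the optimiser $t^\ast(\epsilon)$ tends to $0$ so that the series stays inside its region of convergence — delicate precisely at the boundary $\zeta=1$ — that $\Lambda$ is smooth near $0$ with $\Lambda''(0)=\operatorname{Var}(W_1)$ finite, and that the remainder in the quadratic expansion of the Legendre transform is genuinely $o(\epsilon^2)$. Establishing differentiability of $\Lambda$ up to the boundary of its domain, together with controlling the alternating series governing the left tail, are the points demanding the most care; everything else is the routine unwinding of definitions and a termwise moment computation licensed by Lemma~\ref{stablelemma}.
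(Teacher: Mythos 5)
Your proposal is correct, and its core is exactly the paper's argument: the Chernoff/Markov bound applied to $\sum_j \exp(\zeta z_j)$, with the moment generating function computed termwise from the moment identity of Lemma~\ref{stablelemma} (so that $\mathbb{E}\exp(tW_1)=\sum_m (t\zeta^\zeta)^m m^{\zeta m}/m!$), and the observation that this series converges for all $t>0$ when $\zeta<1$, only for $t\zeta^\zeta<e^{-1}$ when $\zeta=1$, and never when $\zeta>1$ (the paper gets this by the ratio test where you use Stirling). Where you genuinely diverge is the $\epsilon\to 0$ limit. You identify $\sup_{t>0}Q_\zeta(t,\epsilon)$ as the Legendre transform of the cumulant generating function evaluated at the threshold $\zeta^\zeta e^{\pm\zeta\epsilon}$ and quote the standard quadratic expansion about the mean, $\Lambda^\ast(\mu+\eta)\approx \eta^2/(2\operatorname{Var}(W_1))$, which neatly exhibits the limit $2(4^\zeta-1)/\zeta^2$ as $2\operatorname{Var}(W_1)/(\zeta\zeta^\zeta)^2$. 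The paper instead substitutes $t=s\epsilon$ into $Q_\zeta$, shows the tail $\sum_{j\geq 3}$ of the series is $o(\epsilon^2)$, passes to the pointwise limit of the rescaled objective (a quadratic in $s$), and optimises that quadratic, invoking convexity to interchange the limit with the optimisation. These are the same computation in different clothing, but note that the paper's scaling $t=s\epsilon$ automatically disposes of the delicacy you flag at the end: the relevant $t$ values are $O(\epsilon)$, hence eventually interior to the domain of convergence even in the boundary case $\zeta=1$, whereas your route obliges you to prove that the optimiser $t^\ast(\epsilon)\to 0$ and that $\Lambda$ is smooth at $0$ (true here, since the series converges on a neighbourhood of $0$ for $\zeta\leq1$, but it must be said). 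One loose end you share with the paper: for the left tail, equating $\mathbb{E}\exp(-tW_1)$ with the alternating series requires absolute convergence (Fubini), so at $\zeta=1$ the supremum defining $G_L$ should be restricted to $t<e^{-1}$ just as for $G_R$; your parenthetical ``finite since $W_1\geq 0$'' justifies finiteness of the expectation but not the series representation. This is no worse than the paper's ``the result follows similarly for the left tail.''
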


\begin{proof}
See the supplementary material.
\end{proof}

Given $\epsilon > 0$ and $0 < \rho < 1$, bounding the additive approximation error by
$$
\P \left ( | \hat{\delta}_{lm}(1) - \delta | \geq \epsilon \right ) < \rho,
$$
requires that 
$$
k > -\frac{G}{\epsilon^2} \log \left ( \frac{\rho}{2} \right ) = O \left ( \frac{1}{\epsilon^2} \right ),
$$
where $G = \max \{G_L, G_R \}$.  For $\epsilon \in  [0.1, 1]$, numerical approximations show that the constants $G_R$ and $G_L$ fall in ranges (4.0, 6.0) and (6.0, 9.5), respectively, so the hidden constant in the big O notation, ignoring the term $\log(1/\rho)$, is small, for small $\epsilon$.  Hence, the algorithm requires $O(\epsilon^{-2} \log T)$ random bits of space for a data stream with $d_t \in \{-1,1\}$.  The space complexity increases to $O \left (\epsilon^{-2} \log T \log(T/\epsilon) \right )$ bits after applying Nisan's PRNG \citep{Nisan.92, Indyk.06}. In the general case that allows insertions and deletions with $d_t \in \{-M,\ldots,M\}$, it suffices to increase $T$ by a factor $M$ \citep{HNO.08.2}.

\section{CONCLUSION}
This paper joins a growing body of literature on estimating the empirical Shannon entropy over streaming data efficiently, with small memory usage and fast updates.  In particular, we adopt the method of random projections to the maximally skewed, strictly stable distribution with parameters $\alpha = 1$ and $\beta=-1$, thus avoiding the problem of the choice of parameter $\alpha$ \citep{HNO.08.2,LZ.11}.  We derive properties of this distribution, showing that it has a surprisingly simple characteristic function $(i \theta)^{i \theta}$ and that the $k$th moment of the exponential of such a variable is $k^k$ for all positive real values of $k$. These properties enable the Shannon entropy to be estimated directly from the associated data sketch as the logarithm of a simple average. 

We recommend the asymptotically unbiased log-mean estimator with $\zeta = 1$ to provide an additive approximation of the Shannon entropy.   By estimating the entropy directly, rather than via the R\'enyi entropy with $\alpha \approx 1$, we can determine the asymptotic relative efficiency of our estimator: 0.932 with $\zeta = 1$.  Moreover, the probability of the estimator having an additive error greater than $\epsilon$ decreases exponentially with $k\epsilon^2$ for small $\epsilon$, where $k$ is the size of the data sketch.  This results in a near-optimal space complexity bound of \begin{math}
O \left (\epsilon^{-2} \log T \log(T/\epsilon) \right ),
\end{math}
where $T$ is the length of the data stream observed.  


\subsubsection*{References}
\bibliographystyle{plainnat}
\bibliography{entropy_article}

\newpage
\subsubsection*{APPENDIX - SUPPLEMENTARY MATERIAL}
\paragraph{Proof of Lemma~\ref{stablelemma}}

Following \cite{Zolotarev.86}, the c.f.\ of $Z_\alpha$ equals
$$
\mathbb{E} e^{i \theta Z_{\alpha}} = \exp\left\{-|\theta|^\alpha\cos \left ( \frac{\pi \alpha}{2} \right ) + i |\theta|^{\alpha} \text{sgn}(\theta) \sin \left ( \frac{\pi \alpha}{2} \right )\right\},
$$
for $\theta \in \mathbb{R} $, where $\text{sgn}(\theta) = \theta/|\theta|$ for $\theta \neq 0$, and 0 otherwise. As $\alpha \to 1$, $\mathbb{E} e^{i \theta Z_{\alpha}} \to e^{i \theta}$, so $\lim_{\alpha \to 1} Z_{\alpha} = 1$.  It follows that the limit $Y_1 = \lim_{\alpha \to 1}Y_{\alpha}$ exists. 

 For $\theta \in \mathbb{R}$, the c.f. of $Y_{\alpha}$ is given by 
\begin{align} \nonumber
\mathbb{E} e^{ i \theta Y_{\alpha}} & = \exp \left \{ i \theta \left ( \frac{1}{1-\alpha} + \log(1-\alpha) \right ) \right \} \mathbb{E} e^{ -i \theta \frac{Z_{\alpha} }{1-\alpha}} \\ \nonumber
& =- \left | \frac{\theta}{1-\alpha} \right |^{\alpha}  \exp \left \{ i \theta \left ( \frac{1}{1-\alpha} + \log(1-\alpha) \right )\right \} \\ \label{cf}
& \qquad \times \exp \left \{ i \text{sgn}(\theta) \frac{\pi \alpha}{2} \right \}
\end{align}
Letting $\alpha \to 1$ in \eqref{cf}, we have the desired limit $(i \theta)^{i \theta}$.  Lastly, the $k$th moment of $\exp (Y_{\alpha})$ for $k > 0$ is given by
\begin{align} \nonumber 
\mathbb{E} e^{ k Y_{\alpha}} &= (1-\alpha)^{k} e^{k/(1-\alpha)} \mathbb{E} e^{ -k Z_{\alpha}/(1-\alpha) } \\ \label{mgf}
&= (1-\alpha)^{k} e^{- [k/(1-\alpha)]^{\alpha} + k/(1-\alpha)},
\end{align}
where the second equality follows from the Laplace transform of $Z_{\alpha}$.  Taking the limit as $\alpha \to 1$ in \eqref{mgf}, we obtain the $k$th moment of $\exp(Y_1)$ for $k > 0$ as follows.  Define $n = 1/(1-\alpha)$.
\begin{align*}
\E e^{ kY_{\alpha} } &= \exp \left \{ k \left [ n - \log n - \frac{n}{(kn)^{1/n}} \right ]  \right \} \\
&= \exp \left \{  - k n^{-1/n} \left ( n \left [ k^{-1/n} -1 \right ] \right ) \right \} \times \\
& \qquad \exp \left \{ k \left [n - n^{1-1/n} - \log n \right ] \right \}.
\end{align*}
As $\alpha \to 1$, $n \to \infty$, and we have that $\lim_{n \to \infty} n^{-1/n} = 1$ and $\lim_{n \to \infty} n \left [ k^{-1/n} -1 \right ] = \log(1/k)$.  It remains to show that $n - n^{1-1/n} - \log n \to 0$ as $n \to \infty$.  By rewriting
$$
n - \frac{n}{n^{1/n}} - \log n = n \left [  1 - 1/n^{1/n} + \log \left (1/n^{1/n} \right ) \right ], 
$$
we use the fact that for $n > 1$, the Taylor expansion of $\log \left (1/n^{1/n} \right )$ is
\begin{align*}
\log \left (n^{-1/n} \right ) &= \left ( n^{-1/n} - 1 \right ) + \sum_{i=2}^{\infty} \frac{(-1)^{i+1}( n^{-1/n} - 1)^i}{i}  \\
&= \left ( n^{-1/n} - 1 \right ) + O \left ( (n^{-1/n} - 1)^2 \right ).
\end{align*}
So,
$$
n - \frac{n}{n^{1/n}} - \log n = O \left ( n^{1-2/n} - 2n^{1-1/n} + n \right ),
$$
and the right hand side converges to 0 as $n \to \infty$.

\paragraph{Proof of Lemma~\ref{lemma:logmean}}

Consider the following transformation:
$$
w_j = e^{\zeta y_j} = e^{\zeta(\delta + z_j)} = e^{\zeta \delta} e^{\zeta z_j},
$$
where $z_j \sim F (z;1,-1,\pi/2,0 )$ i.i.d.\ have characteristic function $\phi(\theta) = \mathbb{E}\exp(i \theta z_j) = (i\theta)^{i\theta}$, for $\theta \in \mathbb{R}$.  Then, from Lemma \ref{stablelemma},
\begin{math}
\mathbb{E}w_j = e^{\zeta \delta} \zeta^{\zeta}.
\end{math}
Let $\eta = e^{\zeta \delta}$.  The estimator 
$$
\hat{\eta}(\zeta) = \zeta^{-\zeta} k^{-1} \sum_{j=1}^k w_j
$$ 
is unbiased for $\eta$, i.e., $\E \hat{\eta}(\zeta) = \eta$, and has variance $\v (\hat{\eta}(\zeta)) = \eta^2 k^{-1} (4^{\zeta} - 1)$.  

Moreover, by the Central Limit Theorem, as $k \to \infty$,
$$
\sqrt{k} \eta^{-1} (\hat{\eta}(\zeta) - \eta) \to \Normal \left (0, 4^{\zeta} - 1 \right ).
$$
\noindent
The log-mean estimator of $\delta$ is
$$
\hat{\delta}_{lm}(\zeta) = \zeta^{-1} \log \hat{\eta} = \zeta^{-1} \log \left ( \zeta^{-\zeta} k^{-1} \sum_{j=1}^k \exp(\zeta y_j) \right ).
$$
\noindent
By the Delta Method, as $k \to \infty$,
\begin{math}
\sqrt{k} \big (\hat{\delta}_{lm}(\zeta) - \delta \big ) \to \Normal \big ( 0, \zeta^{-2}(4^{\zeta} - 1) \big ),
\end{math}
so $\hat{\delta}_{lm}(\zeta)$ is asymptotically unbiased for $\delta$.

Finally, we want to find the optimal value of $\zeta$ that maximizes the ARE of $\hat{\delta}_{lm}(\zeta)$ relative to the MLE of $\delta$.  So, we begin by estimating the Fisher information about $\delta$ contained in a single random variable following the $F \left (y;1,-1,\pi/2, \delta \right )$ distribution.  Let $f \left (y;1,-1,\pi/2, \delta \right )$ denote the corresponding density function.  From Algorithm 1, it is possible to show that the density is given by
$$
f \left (y;1,-1,\pi/2, \delta \right ) = \frac{e^{y - \delta}}{\pi} \int_0^{\pi} e^{-g(w)} e^{ -e^{y - \delta - g(w)}} dw,
$$
for $-\infty < y < \infty$, where
$$
g(w) = \frac{w}{\tan(w)} + \log \left ( \frac{\sin(w)}{ w} \right ).
$$
And the Fisher information about $\delta$ is expressed as
\begin{align*}
\I_1(\delta) &= \E \left (  \frac{\partial }{\partial \delta} \log f(y;1,-1,\pi/2, \delta)\right )^2 \\
&= 1 - \frac{2}{\pi} \int_0^{\infty} s \I(2,s)ds + \frac{1}{\pi} \int_0^{\infty} s^2 \frac{\I(2,s)^2}{\I(1,s)}ds,
\end{align*}
where
$$
\I(l,s) = \int_{0}^{\pi} e^{-lg(w)} e^{-s e^{-g(w)}}dw, \quad  l=1,2.
$$

We evaluate the integrals in $\I_1(\delta)$ numerically, and obtain that the Cram\'er-Rao lower bound for estimating $\delta$ is approximately $1/\I_1(\delta) = (0.3578)^{-1}$.  So the ARE is 
$$
\frac{\zeta^2}{0.3578(4^{\zeta} - 1)}.
$$  
This is a concave function that attains a maximum value of $0.942$ when $\zeta \approx 1.15$.  When $\zeta = 1.0$, the ARE evaluates to $0.932$.

\paragraph{Proof of Lemma~\ref{lemma:tail_bounds}}
For $\epsilon > 0$ and $t > 0$, 
\begin{align*}
\P \left ( \hat{\delta}_{lm}(\zeta) - \delta \geq \epsilon \right ) &= \P \left ( \frac{\zeta^{-\zeta}}{k} \sum_{j=1}^k \exp(\zeta z_j)  \geq e^{\zeta \epsilon} \right ) \\
&\leq e^{ -tk e^{\zeta \epsilon} } \mathbb{E} \exp \left \{ \sum_{j=1}^k \frac{t e^{\zeta z_j}}{\zeta^{\zeta}} \right \},
\end{align*}
by the Chernoff bound \citep{GS.01}, provided the right hand side converges.  Define $T_j = t \zeta^{-\zeta} e^{\zeta z_j}$, $j=1,\ldots,k$.  Then, 
\begin{align*}
\mathbb{E} \exp \sum_{j=1}^k T_j &= \left ( \mathbb{E} \exp(T_1) \right )^k = \left \{ \sum_{j=0}^{\infty} \mathbb{E} T_1^j / j! \right \}^k \\
&= \left \{ \sum_{j=0}^{\infty} t^j j^{\zeta j} / j! \right \}^k.
\end{align*}
\noindent
By the Ratio Test, the series is absolutely convergent for all $t > 0$ if $0 < \zeta < 1$, and for $0 < t < e^{-1}$ if $\zeta = 1$.  If $\zeta > 1$, the series is divergent.  Define $T = \{t; t > 0\}$ for $0 < \zeta < 1$, and $T = \{t; 0 < t < e^{-1} \}$ for $\zeta = 1$.  It follows that, if $\zeta \leq 1$, then $\hat{\delta}_{lm}(\zeta)$ has an exponentially decreasing right tail bound that satisfies
$$
\P \left ( \hat{\delta}_{lm}(\zeta) - \delta \geq \epsilon \right ) < \exp \big ( -k \epsilon^2 / G_R \big ),
$$
where
\begin{equation} \label{eq:sup}
\frac{\epsilon^2}{G_R} = \sup_{t \in T} \bigg \{ -\log \Big ( \sum_{j=0}^{\infty} \frac{t^j j^{\zeta j}}{j!} \Big ) + t e^{\zeta\epsilon} \bigg \}.
\end{equation}   
\noindent 
It is straightforward to show that the function maximized in \eqref{eq:sup} is concave.  The result follows similarly for the left tail bound. Furthermore by expanding the series in \eqref{eq:sup} for small values of $t$ we can show that as $\epsilon \to 0$ both $G_R$ and $G_L$ converge to $2(4^\zeta-1)/\zeta^2$.  The details are as follows.

Define
$$
M_{\zeta}(t) = \sum_{j=0}^{\infty} \frac{t^j j^{\zeta j}}{j!},
$$ 
and consider
$$
K_{\zeta}(s, \epsilon) = \big ( M_{\zeta}(s\epsilon) \exp(-s \epsilon e^{\zeta \epsilon}) \big )^{1/\epsilon^2}, \ s>0.
$$
$K_{\zeta}(s, \epsilon)$ is a convex function \citep{GS.01}, so it follows that $\inf_{s > 0}K_{\zeta}(s, \epsilon) \to \inf_{s > 0} K_{\zeta}^{\star}(s)$, where $K_{\zeta}^{\star}(s)$ is the pointwise limit of $K_{\zeta}(s, \epsilon)$ as $\epsilon \to 0$, provided this limit exists.  Furthermore, since $1/G_R = -\log \big (\inf_{s > 0} K_{\zeta}(s, \epsilon) \big )$, it follows that 
\begin{math}
\lim_{\epsilon \to 0} G_R = -\big [\log (\inf_{s > 0} K_{\zeta}^{\star}(s)) \big ]^{-1}.
\end{math}
To establish the pointwise limit, first note that if $s\epsilon \in T$, then
\begin{displaymath}
\sum_{j=3}^{\infty} \frac{(s\epsilon)^j j^{\zeta j}}{j!} \leq \epsilon^3 \sum_{j=3}^{\infty} \frac{s^j j^{\zeta j}}{j!} = o(\epsilon^2).
\end{displaymath}
So that expanding in powers of $\epsilon$, we have that
\begin{align*}
\log(K_{\zeta}(s, \epsilon)) & = \frac{1}{\epsilon^2} \left [ \log \left ( 1 + s\epsilon + \frac{(s\epsilon)^2 4^{\zeta}}{2!} + o(\epsilon^2) \right ) \right ] \\
&  \quad + \frac{1}{\epsilon^2} \left [- s\epsilon(1 + \zeta \epsilon) + o(\epsilon^2) \right ] \\
& = \frac{s^2}{2} \left  \{4^{\zeta} - 1 \right \} = K_{\zeta}^{\star}(s)
\end{align*}
Differentiating with respect to $s$, we obtain
\begin{math}
\inf_{s > 0} K_{\zeta}^{\star} (s) = - \zeta^2 / [2(4^{\zeta} - 1)],
\end{math}
as required, where the convexity ensures a unique minimum.

\end{document}